\newtheorem{theorem}{Theorem}
\begin{document}
\title{Uplink Performance Analysis of Cell-Free mMIMO Systems under Channel Aging }
\author{\IEEEauthorblockN{Ribhu Chopra,~\IEEEmembership{Member, IEEE}, Chandra R. Murthy,~\IEEEmembership{Senior Member, IEEE}, and Anastasios K. Papazafeiropoulos,~\IEEEmembership{Senior Member, IEEE}} 
	
	\thanks{R. Chopra  is with the Department of Electronics and Electrical Engineering, Indian Institute of Technology Guwahati, Assam, India. (email: ribhufec@iitg.ac.in).			
		C. R. Murthy is with the Department of Electrical Communication Engineering, Indian Institute of
		Science, Bangalore, India.
		(email: cmurthy@iisc.ac.in).
	A. K. Papazafeiropoulos is with the CIS Research Group, University of Hertfordshire, Hatfield, U. K., and SnT at the University of Luxembourg, Luxembourg (e-mail: tapapazaf@gmail.com).

	}	
}

\maketitle
\begin{abstract}
	In this paper, we investigate the impact of channel aging on the uplink performance of a cell-free~(CF) massive multiple-input multiple-output (mMIMO) system with a  minimum mean squared error~(MMSE) receiver.  To this end, we present a new model for the temporal evolution of the channel, which allows the channel to age at different rates at different access points (APs). Under this setting, we derive the deterministic equivalent of the per-user achievable signal-to-interference-plus-noise ratio~(SINR). In addition to validating the theoretical expressions, our simulation results reveal that, {at low user mobilities,} the  SINR of CF-mMIMO is nearly  $5$ dB higher than its cellular counterpart with the same number of antennas, and about $8$ dB higher than that of an equivalent small-cell network with the same number of APs. {On the other hand, at very high user velocities, and when the channel between the UEs the different APs age at same rate, the relative impact of aging is higher for CF-mMIMO compared to cellular mMIMO. However, when the channel ages at the different APs with different rates,  the effect of aging on CF-mMIMO is marginally mitigated, especially for larger frame durations.} 
\end{abstract}
	\begin{IEEEkeywords}
	Cell Free mMIMO, channel aging, performance analysis.
\end{IEEEkeywords}
\section{Introduction}
Cell-free~{(CF)} massive multiple-input multiple-output~{(mMIMO)} {systems are considered}  the natural successor to the cellular mMIMO {technology} for physical layer access in next-generation wireless  systems~\cite{BJORNSON20193,Ngo_TWC_2019,Red_book,Masoumi_TWC_2020}. The canonical CF-mMIMO setup consists of a large number of access points~(APs) spread over a given physical area, and connected to a single central processing unit~(CPU). Since the signals received from all the UEs at multiple APs are processed jointly at the CPU, a CF system becomes a distributed mMIMO system. In contrast, a cellular mMIMO system consists of a single base station/AP with a massive number of antennas serving all the users in the cell. 
CF-mMIMO systems offer the advantage of more uniform coverage  compared to conventional cellular systems, while retaining the benefits of mMIMO systems such as high spectral efficiency~\cite{bounds} and simple linear processing at the APs/CPU~\cite{Red_book}. However, CF-mMIMO also inherits mMIMO's  dependence on accurate channel state information~(CSI)~\cite{Papa_TWC_2015,CRM_TWC_2017}. However, the available CSI may be impaired due to several factors such as pilot contamination~\cite{EGL_TWC_2019}, channel aging~\cite{aging1}, and {hardware impairments~\cite{Masoumi_TWC_2020}.}  

The phenomenon of channel aging, classically known as time-selectivity, is a manifestation of the temporal variation in the wireless channel  caused due to user mobility and phase noise~\cite{Mahyiddin_ComLet_2015, aging1,Papa_TWC_2015,Papa_TVT_2016,bounds,Lett_aging_LMIMO,CRM_TWC_2017}. It has been shown that while the power scaling and the  array gains achieved by mMIMO are retained under channel aging~\cite{bounds}, {the achievable signal-to-interference-plus-noise ratio (SINR)} {gradually decays} over time {as the channel estimates become outdated}, which, in turn, limits the system dimensions~\cite{Lett_aging_LMIMO,CRM_TWC_2017}. {While there have been recent efforts to counter the effect of aging via channel prediction~\cite{guo2020predictor}, the applicability of these techniques to cellular or cell free mMIMO systems remains to be seen.}

In terms of the effect of aging on the underlying architecture, a key difference between cellular and CF  mMIMO systems is that  
in the latter system, the channels between the APs and UEs could potentially evolve over time at different rates. 
 {Together with} the CPU based joint processing of the uplink signals, this makes the analysis (and ultimately the effect) of channel aging on CF systems {different from cellular mMIMO~\cite{Papa_TVT_2016}.} {However, despite its significance}, to the best of our knowledge, the effect of channel aging on CF-mMIMO systems has not been investigated in the previous literature.

In this paper, we characterize the effect of channel aging on the uplink achievable SINR of a CF-mMIMO system. First, we develop a model for the relative speed of the users with respect to the different APs. Then, using results from random matrix theory~\cite{RMT}, we derive {an analytical expression for the deterministic equivalent of the SINR, i.e., the SINR averaged over all fading channel realizations, as a function of the UE locations and velocities}. Through simulations, we compare the achievable SINR against (a) the conventional cellular mMIMO {setup,} and (b) a small cell system~\cite{Ngo_TWC_2019}, to assess the relative impact of channel aging on the three systems. We conclude that CF-mMIMO systems generally provide significantly better SINR compared to the other two systems in the presence of channel aging, but {the relative impact of aging is more severe on CF-mMIMO at very high user velocities and long frame durations.} 
\vspace{-10pt}
\section{System Model and Channel Estimation}
We consider a CF-mMIMO system with $M$ APs each having $N$ antennas and serving a total of $K$ users. The APs are connected to a CPU over a delay-free unlimited capacity front-haul link. The locations of the APs and the UEs are considered to be points on the two dimensional plane, with the locations of the $m$th {AP and the $k$th UE represented by using the complex numbers $\mathtt{d}_{a,m}$ and $\mathtt{d}_{u,k}$, respectively}. 
{Also, } the path loss between the $k$th user and the $m$th AP is modeled {by} using a piecewise linear multi-slope model as
\begin{equation}
\beta_{mk}= \mu_l\left({\lvert \mathtt{d}_{a,m}-\mathtt{d}_{u,k} \rvert}\right)^{-\eta_l},  d_{l-1}<\lvert \mathtt{d}_{a,m}-\mathtt{d}_{u,k} \rvert < d_l, \label{eq:multi_slope}
\end{equation}
{where $\mu_l$ is a normalization constant,  $\eta_{l}$  is the $l$th slope, and  $d_l$ is the $l$th threshold,} $1 \le l \le L$, with $d_L = \infty$~\cite{multi_slope}.

The fast fading channel between the $k$th user and the $m$th AP at the $n$th {instant}, denoted by $\mathbf{h}_{mk}[n]\in\mathbb{C}^{N\times 1}$, is assumed to evolve in time according to the relation~\cite{Lett_aging_LMIMO, Jakes}
\begin{equation}
\mathbf{h}_{mk}[n+\tau]=\rho_{mk}[\tau]\mathbf{h}_{mk}[n]+\bar{\rho}_{mk}[\tau]{\mathbf{z}_{h,mk}[n+\tau;n]}, \label{eq:channel_evolution}
\end{equation}
where $\rho_{mk}[\tau]$ is the  correlation coefficient of the channel between the $k$th UE and the $m$th AP at lag $\tau$. 
 The channel state $\mathbf{h}_{mk}[n]$ and the innovation $\mathbf{z}_{h,mk}[n+\tau;n]$ are assumed to be distributed as $\mathcal{CN}(0,\mathbf{I}_N)$, with $E[\mathbf{h}_{mk}[n]\mathbf{z}^H_{h,mk}[n+\tau;n]]=\mathbf{O}_{N}$, {where} $\mathbf{I}_N$ and $\mathbf{O}_N$ are the $N\times N$ identity and all zero matrices, respectively. We also assume that the innovation process $\mathbf{z}_{h,mk}[t;n]$ is wide sense stationary, and temporally white over the time index $t$, for a given anchor index~$n$. In \eqref{eq:channel_evolution} and throughout the paper, for any variable $x \in [-1, 1]$, {we denote} $\bar{x} \triangleq \sqrt{1 -x^2}$. 
 
 Conventionally, the correlation coefficient $\rho_{mk}[\tau]$ is assumed to follow the Jakes' model~\cite{Jakes}, or the first order autoregressive~(AR-1) model~\cite{Lett_aging_LMIMO}.  It is dependent on the user's speed through the Jakes' spectrum.  {However, this model is developed under the assumption that the scatterers are localized around the UEs. In case of CF-mMIMO systems, in general, it is not reasonable to expect that scatterers are localized only around the UE, and assume that the correlation coefficient is the same across all APs. {Moreover}, to the best of our knowledge, measurement campaigns elucidating the variation of the temporal correlation coefficient across APs are not available in the literature. Therefore, in this paper, we consider a simple generalization of the existing Jakes' spectrum based model, where the correlation coefficient depends on the relative speed of the UEs with respect to the different APs. To elaborate, we consider $v_{mk}$ to be a random variable that is i.i.d. across the APs with mean $v_{k}$ (i.e., depending only on the user index) and support $[(1-\Delta) v_{k},  (1+\Delta) v_{k}]$, where $0 \le \Delta \le 1$. The correlation coefficient is then defined as $\rho_{mk}[\tau] = J_0(2 \pi f_{d,mk} T_s\tau)$, where $J_0(\cdot)$ is the Bessel function of zeroth-order and first kind\cite{Jakes}, $f_{d,mk}= v_{mk} f_c/c$ is the Doppler frequency of the $k$th user with respect to the $m$th AP, $f_c$ is the carrier frequency, $c$ is the speed of light, and $T_s$ is the sampling interval. }
 
{Note that  $\Delta = 0$ results in the conventional model, where the correlation coefficient is the same at all APs.} 
{Nonetheless, even in this case, the effect of aging on a CF-mMIMO is different from cellular mMIMO. This is because, in cellular mMIMO, only the associated base station decodes the signal from a given UE, and the UE's signal arriving at other base stations is treated as interference. When $\Delta > 0$, the channel ages at different APs at different rates, and we will see in the section on numerical results that this has a marginally beneficial effect on a CF system at high user velocities and large frame durations.}



The uplink frame, consisting of a total of $T$ channel uses, is segmented into two subframes: training and data transmission. Over the first $P(\le K)$ channel uses, the UEs transmit uplink pilots to the APs with the $l$th UE employing a pilot energy $\mathcal{E}_{upl}$. For simplicity, we consider that the individual pilot signals occupy a single time slot, and the UEs whose pilots share the $p$th slot are contained in the set $\mathcal{U}_{p}$. These pilots are used by the respective APs to estimate the channels from these UEs.
In the next phase, consisting of $T-P$ channel uses, the UEs transmit uplink data. 
In this paper, we assume the system to have a ``level 4'' centralization of data processing according to the classification in~\cite{Bjornson_TWC_2020}. That is, the APs share all the available information with the CPU, including the channel estimates and the received data symbols. At the CPU, the received symbols are processed using an MMSE combiner to obtain estimates of the symbols transmitted by the users.

During the $p$th uplink training slot, the signal received by the $m$th AP is given by
\begin{align}
&\mathbf{y}_{m}[p]=\sum_{l \in \mathcal{U}_p}\sqrt{\beta_{ml} \mathcal{E}_{upl}} \mathbf{h}_{ml}[p]+\sqrt{N_0}\mathbf{w}_{l}[p]\nonumber \\&=\sqrt{\beta_{mk} \mathcal{E}_{upk}} \mathbf{h}_{mk}[p]+\sum_{\substack{l \in \mathcal{U}_p\\l\neq k}}\sqrt{\beta_{ml} \mathcal{E}_{upl}} \mathbf{h}_{ml}[p]+\sqrt{N_0}\mathbf{w}_{l}[p].\nonumber
\end{align}

The $m$th AP uses the pilot signal received at time $p$, $\mathbf{y}_{m}[p]$, to obtain the MMSE estimate of the $k$th UEs channel at time $P$, i.e., $\mathbf{h}_{mk}[P]$. We denote the MMSE estimate of $\mathbf{h}_{mk}[P]$ by $\mathbf{\hat{h}}_{mk}$. The estimate $\hat{\mathbf{h}}_{mk}$ is used by the CPU to decode the $k$th UE's signal over the entire frame. 

Using the well-known time-reversal property~\cite{OSAWA198861} of \eqref{eq:channel_evolution}, we can write  
\begin{align}
\mathbf{y}_{m}[p]&=\rho_{mk}[P-p]\sqrt{\beta_{mk} \mathcal{E}_{upk}} \mathbf{{h}}_{mk}[P]\nonumber\\&+\bar{\rho}_{mk}[P-p]\sqrt{\beta_{mk} \mathcal{E}_{upk}} \mathbf{{z}}_{h,mk}[p;P]\nonumber\\&+\sum_{\substack{l \in \mathcal{U}_p\\l\neq k}}\sqrt{\beta_{ml} \mathcal{E}_{upl}} \mathbf{h}_{ml}[p]+\sqrt{N_0}\mathbf{w}_{l}[p].
\end{align}
Consequently, we have
$
\mathbf{\hat{h}}_{mk}=\frac{\rho_{mk}[{P-p}]\sqrt{\beta_{mk}\mathcal{E}_{upk}}}{\sum_{l\in \mathcal{U}_p}\beta_{ml}\mathcal{E}_{upl}+N_0}\mathbf{y}_{m}[p], 
$
and
\begin{equation}
\mathbf{h}_{mk}[P]=a_{mk}\hat{\mathbf{h}}_{mk}+\bar{a}_{mk}\tilde{\mathbf{h}}_{mk},
\end{equation}
with $E[\hat{\mathbf{h}}_{mk}\tilde{\mathbf{h}}^H_{mk}]=\mathbf{O}_N$ {and} $a_{mk}=\sqrt{\frac{\rho^2_{mk}[P-p]\beta_{mk}\mathcal{E}_{upk}}{\sum_{l\in \mathcal{U}_p}\beta_{ml}\mathcal{E}_{upl}+N_0}}$. 

Therefore, for $P+1\le n \le T$, we have
\begin{align} 
\mathbf{h}_{mk}[n]=&\rho_{mk}[n-P]
a_{mk}\hat{\mathbf{h}}_{mk}+\rho_{mk}[n-P]\bar{a}_{mk}\tilde{\mathbf{h}}_{mk}\nonumber \\
+&\bar{\rho}_{mk}[n-P]\mathbf{z}_{h,mk}[n;P].
\label{eq:hmk}
\end{align}


With the system model and channel estimates in hand, we can now proceed with the uplink SINR analysis. 
\vspace{-10pt}
\section{Uplink SINR Analysis}

During the data phase, i.e., over the next $T-P$ channel uses, the UEs simultaneously transmit the data to all the APs. If the $k$th UE transmits the symbol $s_k[n]$ at the $n$th instant (satisfying $E[\lvert s_k[n] \rvert^2]=1$) with power $\mathcal{E}_{usk}$, then the signal received at the $m$th AP is 
\begin{equation}
\mathbf{y}_m[n]=\sum_{k=1}^K\sqrt{\beta_{mk}\mathcal{E}_{usk}}\mathbf{h}_{mk}[n]s_k[n]+\sqrt{N_0}\mathbf{w}_m[n].
\end{equation}
Letting $\mathbf{y}[n]\triangleq [\mathbf{y}^T_1[n], \mathbf{y}^T_2[n], \ldots, \mathbf{y}^T_M[n]]^T$, $\mathbf{h}_{k}[n]\triangleq [\mathbf{h}^T_{1k}[n], \mathbf{h}^T_{2k}[n], \ldots, \mathbf{h}^T_{Mk}[n]]^T$, $\mathbf{H}[n]\triangleq [\mathbf{h}_{1}[n], \mathbf{h}_{2}[n], \ldots, \mathbf{h}_{K}[n]]$, $\boldsymbol{\beta}_{k}\triangleq [\beta_{1k}, \beta_{2k}, \ldots, \beta_{Mk}]^T$, ${\mathtt{{B}}}\triangleq [\boldsymbol{\beta}_{1}, \boldsymbol{\beta}_{2}, \ldots, \boldsymbol{\beta}_{K}]$, $\boldsymbol{\rho}_{k}[n]\triangleq [\rho_{1k}[n], \rho_{2k}[n], \ldots, \rho_{Mk}[n]]^T$, ${\mathtt{{R}}}[n]\triangleq [\boldsymbol{\rho}_{1}[n], \boldsymbol{\rho}_{2}[n], \ldots, \boldsymbol{\rho}_{K}[n]]$, $\mathbf{a}_{k}\triangleq [a_{1k}, a_{2k}, \ldots, a_{Mk}]^T$, ${\mathbf{{A}}}\triangleq [\mathbf{a}_{1}, \mathbf{a}_{2}, \ldots, \mathbf{a}_{K}]$, $\mathbf{s}[n] \triangleq [s_1[n], \ldots, s_K[n]]^T$, we can write the concatenated signal received by the CPU at the $n$th instant as
\begin{align} 
&\mathbf{y}[n]=\sum_{k=1}^K\sqrt{\mathcal{E}_{usk}(\mathbf{1}_N\otimes\boldsymbol{\beta}_{k})}\odot\mathbf{h}_{k}[n]s_k[n]+\sqrt{N_0}\mathbf{w}[n] \nonumber \\
&=(\mathbf{1}_N\otimes(\mathbf{A}\odot\mathtt{R}[n-P]\odot\sqrt{\mathtt{B}}))\odot\mathbf{\hat{H}}\text{diag}(\sqrt{\boldsymbol{\mathcal{E}}_{us}})\mathbf{s}[n] \nonumber \\
&+(\mathbf{1}_N\otimes(\mathbf{\bar{A}}\odot\mathtt{R}[n-P]\odot\sqrt{\mathtt{B}}))\odot\mathbf{\tilde{H}}\text{diag}(\sqrt{\boldsymbol{\mathcal{E}}_{us}})\mathbf{s}[n] \nonumber \\
&+ (\mathbf{1}_N\otimes(\mathtt{\bar{R}}[n-P]\odot\sqrt{\mathtt{B}}))\odot\mathbf{Z}_h[n;P]\text{diag}(\sqrt{\boldsymbol{\mathcal{E}}_{us}})\mathbf{s}[n] \nonumber \\
&+ \sqrt{N_0}\mathbf{w}[n],
\label{eq:yn_expand}
\end{align}
{where} $\odot$ and $\otimes$ represent the Hadamard product and the Kronecker product of two matrices, respectively, and $\mathbf{1}_N$ {is} the $N\times 1$ all ones vector. The square-root operation above (and elsewhere in this paper) represents the element{-}wise square root; note that all the entries of these matrices are nonnegative. To obtain the last equation above, we use~\eqref{eq:hmk}, and anchor time at the $P$th instant. The first term in \eqref{eq:yn_expand} corresponds to the signal $s[n]$ received over the known channel $\mathbf{\hat{H}}$, the second term corresponds to the interference caused due to the channel estimation error at time $P$, the third term corresponds to the interference due to the evolution of the channel between time $P$ and time $n$, and the last term corresponds to AWGN. 

Now, the received vector is first pre-processed using an MMSE combiner. We will use the expression for the combined signal to find the achievable rate of the system via a deterministic equivalent analysis. The MMSE combiner requires the covariance matrix of the received signal{, conditioned on the available estimate $\hat{\mathbf{H}}$, which} can be expressed as
\begin{align}
&\mathbf{R}_{yy\vert \mathbf{\hat{H}}}[n]=\left(\mathbf{1}_N\otimes(\mathtt{R}[n-P]\odot \mathbf{A} \odot \sqrt{\mathtt{B}})\odot\hat{\mathbf{H}}\right)\text{diag}(\boldsymbol{\mathcal{E}})\nonumber\\
&\times\left(\mathbf{1}_N\otimes(\mathtt{R}[n-P]\odot \mathbf{A} \odot \sqrt{\mathtt{B}})\odot\hat{\mathbf{H}}\right)^H+(\mathbf{1}_N\mathbf{1}_N^T)\nonumber\\
&\otimes\Bigg(\!\left(\mathtt{R}[n-P]\odot \bar{\mathbf{A}}\!\odot \!\sqrt{\mathtt{B}}\!\right)\!\text{diag}(\boldsymbol{\mathcal{E}})\!\left(\mathtt{R}[n-P]\!\odot\! \bar{\mathbf{A}}\! \odot\! \sqrt{\mathtt{B}}\right)^T\nonumber\\
&+\left(\!\mathtt{\bar{R}}[n-P]\! \odot\! \sqrt{\mathtt{B}}\!\right)\!\text{diag}(\boldsymbol{\mathcal{E}})\left(\mathtt{\bar{R}}[n-P]\!\odot \! \sqrt{\mathtt{B}}\right)^T\!\Bigg)\!+\!N_0\mathbf{I}_{MN}.\nonumber
\\&=\mathbf{\hat{G}}\mathbf{\hat{G}}^H+\boldsymbol{\Psi}\otimes\mathbf{I}_N.
\end{align}
Here, for convenience, we have defined, $\hat{\mathbf{g}}_k[n]=(\mathbf{1}_N\otimes(\boldsymbol{\rho}_k[n-P]\odot\mathbf{a}_k\odot\sqrt{\mathcal{E}_{usk}\boldsymbol{\beta}_k})\odot\hat{\mathbf{h}}_k)$, $\mathbf{\hat{G}}=[\hat{\mathbf{g}}_1[n], \ldots, \hat{\mathbf{g}}_K[n]]^T$, and $\boldsymbol{\Psi}=\text{diag}(\psi_m)$, as a diagonal matrix with $\psi_m$ corresponding to the noise and interference power at the $m$th AP. Also defining $\tilde{\mathbf{g}}_k[n]=(\mathbf{1}_N\otimes(\boldsymbol{\rho}_k[n-P]\odot\bar{\mathbf{a}}_k\odot\sqrt{\mathcal{E}_{usk}\boldsymbol{\beta}_k})\odot\tilde{\mathbf{h}}_k)$, and $\boldsymbol{\xi}_{h,l}[n;P]=(\mathbf{1}_N\otimes(\boldsymbol{\bar{\rho}}_l[n-P]\odot\sqrt{\mathcal{E}_{usl}\boldsymbol{\beta}_l})\odot{\mathbf{z}}_{h,l}[n;P])$.
Then, the MMSE combining vector for the $k$th user's signal is $\mathbf{R}_{yy\vert \hat{\mathbf{H}}}^{-1}[n]\hat{\mathbf{g}}_{k}[n]$, and the decoded signal for $k$th user is given as \vspace{-4pt}
\begin{align}
&{r}_k[n]=\hat{\mathbf{g}}_{k}^H[n]\mathbf{R}^{-1}_{yy\vert \mathbf{\hat{H}}}[n]\hat{\mathbf{g}}_{k}[n]s_k[n]+\sum_{\substack{l=1\\l\neq k}}^K \hat{\mathbf{g}}_{k}^H[n]\mathbf{R}^{-1}_{yy\vert \mathbf{\hat{H}}}[n]\nonumber \\
&\times\hat{\mathbf{g}}_{l}[n]s_l[n]+\sum_{l=1}^K \hat{\mathbf{g}}_{k}^H[n]\mathbf{R}^{-1}_{yy\vert \mathbf{\hat{H}}}[n]\tilde{\mathbf{g}}_{l}[n]s_l[n]+\sum_{l=1}^K \hat{\mathbf{g}}_{k}^H[n] \nonumber \\
&\times \mathbf{R}^{-1}_{yy\vert \mathbf{\hat{H}}}[n]\boldsymbol{\xi}_{h,l}[n;P]s_l[n]\!+\!\sqrt{N_0}\hat{\mathbf{g}}_{k}^H[n]\mathbf{R}^{-1}_{yy\vert \mathbf{\hat{H}}}[n]\mathbf{w}[n]. 
\label{eq:r_signal}
\end{align}

Since the combining vector is based on the MMSE channel estimate, we can use the worst-case noise theorem~\cite{HandH}, and treat interference as noise to get our main result, {namely, the deterministic equivalent of the uplink achievable SINR.}

\begin{theorem}
The deterministic equivalent of the uplink achievable SINR for the $k$th user at the $n$th {instant}, conditioned on the spatial locations of the users, is given by 
\begin{equation}
\gamma_k[n]-\frac{\eta_{s,k}[n]}{\eta_{1,k}[n]+\eta_{2,k}[n]+\eta_{3,k}[n]+\eta_{w,k}}\xrightarrow{a.s.}0,
\end{equation}
where $\eta_{s,k}[n]$ is the desired signal power, $\eta_{1,k}[n]$ corresponds to the residual inter-user  interference after MMSE combining, $\eta_{2,k}[n]$ to the interference due to channel estimation errors, $\eta_{3,k}[n]$ the interference due to channel aging, and $\eta_{w,k}$ is due to AWGN. These can be expressed as 
\begin{equation} 
\eta_{s,k}[n]=N\mathcal{E}^2_{usk}\left(\sum_{m=1}^M\zeta_{mk}[n-P]\varphi_{mk}[n]\right)^2
 \label{eq:etaskn}
\end{equation}
with $\zeta_{mk}[n-P]=\beta_{mk}a^2_{mk}\rho^2_{mk}[n-P]$,
\begin{equation}
	\varphi_{mk}[n]\triangleq \left({\sum_{\substack{l=1\\l\neq k}}^{K}\mathcal{E}_{usl}\frac{{ \zeta_{ml}[n-P]} }{ (1+e_{k,l}[n])} +\psi_m} \right)^{-1} ,
	\label{eq:varphi_umk} 
\end{equation}
and $e_{k,l}[n]$ is iteratively computed as 
$
	e_{k,l}^{(t)}[n]=  N \sum_{m=1}^M\frac{ \zeta_{ml}[n-P]\mathcal{E}_{usl}} {  \sum_{i=1;i \ne k}^{K} \frac{ \zeta_{mi}[n-P]\mathcal{E}_{usi}}{1+e^{(t-1)}_{k,i}[n]} + \psi_m}
	\label{eq:eu}
$
with the initialization $e_{k,i}^{(0)}[n]=\frac{1}{N_0}$. 
\begin{equation} \label{eq:eta2kn}
\eta_{2,k}[n]=\mathcal{E}_{usk}\sum_{l=1}^K\mathcal{E}_{usl}\sum_{m=1}^M\varphi'_{mk}[n]\check{\zeta}_{ml}[n-P], 
\end{equation}
with $\check{\zeta}_{ml}[n-P]=\beta_{ml}\bar{a}^2_{ml}\rho^2_{ml}[n-P]$,  
\begin{equation}
	\varphi_{mk}'[n]=\varphi_{mk}^2[n]\left(\zeta_{mk}[n-P]+\sum_{\substack{l=1\\l\neq k}}^{K}\mathcal{E}_{usl}\frac{{ \zeta_{ml}[n-P]} e'_{k,l}[n] }{ (1+e_{k,l}[n])} \right ), 
\end{equation}
$e'_{k,l}[n]=[(\mathbf{I}_{K-1}-\mathbf{J})^{{-1}}\mathbf{u}]_l$,
$[\mathbf{J}]_{pq}=\frac{\mathcal{E}_{usp}\mathcal{E}_{usq}\text{Tr}\{\boldsymbol{\Xi}_p\mathbf{T}(\boldsymbol{\Psi})\boldsymbol{\Xi}_q\mathbf{T}(\boldsymbol{\Psi})\}}{1+e_{k,l}[n]}$,  ${u}_p=\mathcal{E}_{usp}\text{Tr}\{ \boldsymbol{\Xi}_p[n]\mathbf{T}(\boldsymbol{\Psi})\boldsymbol{\Xi}_k[n]\mathbf{T}(\boldsymbol{\Psi}) \}$, 
$ 
	\mathbf{T}(\boldsymbol{\Psi})=\left(\text{diag}(\boldsymbol{\varphi}_{k}[n])\right)\otimes\mathbf{I}_N
$
\begin{equation} \label{eq:eta3kn}
\eta_{3,k}[n]=\mathcal{E}_{usk}\sum_{l=1}^K\mathcal{E}_{usl}\sum_{m=1}^M\varphi'_{mk}[n]\dot{\zeta}_{ml}[n-P]
\end{equation}
with $\dot{\zeta}_{ml}[n-P]=\beta_{ml}\bar{\rho}_{ml}^2[n-P]$,
\begin{equation} \label{eq:etawkn}
\eta_{w,k}[n]=N_0\mathcal{E}_{usk}\sum_{m=1}^M\varphi'_{mk}[n],
\end{equation}
\begin{equation}\label{eq:eta1kn}
\eta_{1,k}[n]={\mathcal{E}_{usk}}\sum_{\substack{l=1\\l\neq k}}^K\mathcal{E}_{usl} \sum_{m=1}^M \zeta_{mk}[n-P]\zeta_{ml}[n-P] {\dot{\epsilon}_{kl}[n]}, 
\end{equation}
where
\vspace{-5pt}
\begin{align}
\dot{\epsilon}_{kl}[n]&=\dot{\varphi}'_{kl} [n] + 
\frac{\lvert \sum_{m=1}^M \mathcal{E}_{usl} \zeta_{ml}[n-P]  \rvert^2  N^2\dot{\varphi}^2_{kl}[n]\dot{\varphi}'_{kl}[n]  }{\lvert 1+ \sum_{m=1}^M \mathcal{E}_{usl} \zeta_{ml}[n-P]  N\dot{\varphi}_{kl}[n]\rvert^2 }  \nonumber \\&-2\Re \left \{  \frac{ \sum_{m=1}^M \mathcal{E}_{usl} \zeta_{ml}[n-P]    N\dot{\varphi}_{kl}[n]\dot{\varphi}'_{kl}[n] }{ 1+ \sum_{m=1}^M \mathcal{E}_{usl} \zeta_{ml}[n-P]  N \dot{\varphi}_{kl}[n]} \right \} 
\label{eq:eps_def}
\end{align}
with $\Re\{.\}$ denoting the real part of a complex number, and 
\begin{equation}
	\dot{\varphi}_{mkl}[n]\triangleq \left({\sum_{\substack{p=1\\p\neq k,l}}^{K}\mathcal{E}_{usp}\frac{{ \zeta_{mp}[n-P]} }{ (1+e_{k,p}[n])} +\psi_m} \right)^{-1} ,
\label{eq:varphi2} 
\end{equation}
$\dot{e}_{k,lp}[n]$ is iteratively computed as 
$	\dot{e}_{k,lp}^{(t)}[n]=  N\sum_{m=1}^M\frac{ \zeta_{ml}[n-P]\mathcal{E}_{usl}} {  \sum_{i=1;i \ne k}^{K} \frac{ \zeta_{mi}[n-P]\mathcal{E}_{usi}}{1+e^{(t-1)}_{k,i}[n]} + \psi_m}$
with the initialization $e_{mk,i}^{(0)}[n]=\frac{1}{N_0}$. Also, 
\begin{equation}
	\dot{\varphi}_{mkl}'[n]\!=\!\dot{\varphi}_{mkl}^2[n]\!\!\left(\!\!\zeta_{mk}[n-P]\!+\!\sum_{\substack{l=1\\l\neq k}}^{K}\frac{\mathcal{E}_{usl}{ \zeta_{ml}[n-P]} \dot{e}'_{k,l}[n] }{ (1+\dot{e}_{k,l}[n])} \right ), 
\end{equation}
with
$\dot{e}'_{k,l}[n]=[(\mathbf{I}_K-\mathbf{\dot{J}})\mathbf{\dot{u}}]_l$,
$[\mathbf{\dot{J}}]_{pq}=\frac{\mathcal{E}_{usp}\mathcal{E}_{usq}\text{Tr}\{\boldsymbol{\Xi}_p\mathbf{T}(\boldsymbol{\Psi})\boldsymbol{\Xi}_q\mathbf{T}(\boldsymbol{\Psi})\}}{1+e_{k,l}[n]}$,  $\dot{u}_p=\mathcal{E}_{usp}\text{Tr}\{ \boldsymbol{\Xi}_p[n]\mathbf{T}(\boldsymbol{\Psi})\boldsymbol{\Xi}_l[n]\mathbf{T}(\boldsymbol{\Psi} \}$.
\end{theorem}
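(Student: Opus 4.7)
The plan is to treat the decoded signal \eqref{eq:r_signal} as the output of an MMSE receiver on an effective channel with Gaussian interference-plus-noise, then invoke standard deterministic equivalent machinery from random matrix theory~\cite{RMT} to replace the stochastic quadratic forms by their almost-sure limits as $N\to\infty$ with $K/N$ fixed. Since the combiner is the MMSE combiner with respect to the conditional covariance $\mathbf{R}_{yy\mid\hat{\mathbf{H}}}[n]$, the worst-case noise theorem lets us write the instantaneous SINR as $\gamma_k[n]=\hat{\mathbf{g}}_k^H[n]\,\mathbf{R}_{yy\mid\hat{\mathbf{H}},k}^{-1}[n]\,\hat{\mathbf{g}}_k[n]$, where $\mathbf{R}_{yy\mid\hat{\mathbf{H}},k}[n]$ is $\mathbf{R}_{yy\mid\hat{\mathbf{H}}}[n]$ with the $k$th user's signal contribution removed. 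This is the canonical form for which deterministic equivalents are known, but the block structure induced by the $\mathbf{1}_N\otimes(\cdot)$ terms and the non-identity covariance blocks $\boldsymbol{\Psi}\otimes\mathbf{I}_N$ must be tracked carefully.

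The first step is to apply the matrix inversion lemma to peel off each $\hat{\mathbf{g}}_l$ from the resolvent and then apply the trace lemma, namely $\mathbf{x}^H\mathbf{A}\mathbf{x}-\tfrac{1}{N}\mathrm{Tr}(\mathbf{A}\mathbf{C}_x)\xrightarrow{a.s.}0$ for $\mathbf{x}\sim\mathcal{CN}(\mathbf{0},\mathbf{C}_x/N)$ independent of $\mathbf{A}$. Applied to the signal term, this yields $\gamma_k[n]\asymp N\,\mathcal{E}_{usk}\sum_{m}\zeta_{mk}[n-P]\varphi_{mk}[n]$ once the fixed-point scalars $e_{k,l}[n]$ are identified as the unique positive solutions of \eqref{eq:eu}, giving \eqref{eq:etaskn} after squaring. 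The noise term \eqref{eq:etawkn} and the channel-aging and channel-estimation-error terms \eqref{eq:eta2kn}--\eqref{eq:eta3kn} follow by the same mechanism because $\tilde{\mathbf{h}}_{ml}$ and $\mathbf{z}_{h,ml}[n;P]$ are independent of $\hat{\mathbf{H}}$; the resulting second-order quantities bring in the derivative $\varphi'_{mk}[n]$, which arises from differentiating the fixed-point equation in the parameter that governs $e'_{k,l}[n]=[(\mathbf{I}_{K-1}-\mathbf{J})^{-1}\mathbf{u}]_l$ and represents $\mathrm{Tr}(\boldsymbol{\Xi}_p\mathbf{T}(\boldsymbol{\Psi})\boldsymbol{\Xi}_q\mathbf{T}(\boldsymbol{\Psi}))$ up to normalization.

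The genuinely delicate piece is the residual multi-user interference \eqref{eq:eta1kn}, because $|\hat{\mathbf{g}}_k^H[n]\mathbf{R}_{yy\mid\hat{\mathbf{H}}}^{-1}[n]\hat{\mathbf{g}}_l[n]|^2$ depends simultaneously on $\hat{\mathbf{g}}_k$ and $\hat{\mathbf{g}}_l$ inside the resolvent. My plan here is to apply the matrix inversion lemma twice, once for $\hat{\mathbf{g}}_k$ and once for $\hat{\mathbf{g}}_l$, isolating a rank-two perturbation of the doubly-reduced resolvent $\mathbf{R}_{yy\mid\hat{\mathbf{H}},kl}^{-1}[n]$, and then apply the trace lemma on each remaining quadratic form. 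This produces the scalar recursions $\dot{e}_{k,lp}[n]$ and the associated $\dot{\varphi}_{mkl}[n]$, $\dot{\varphi}'_{mkl}[n]$ built on the covariance without users $k$ and $l$; the Sherman--Morrison cross-terms then collapse into the compact expression \eqref{eq:eps_def} for $\dot{\epsilon}_{kl}[n]$, where the three summands correspond respectively to the bare interference, the renormalization factor arising from the two peel-offs, and the cross-covariance interaction.

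The main obstacle will be controlling the joint behaviour of the two simultaneously-peeled terms when proving the multi-user interference contribution: one must show that after the two applications of Sherman--Morrison, the remaining quadratic forms have variances that vanish fast enough to justify replacement by their deterministic limits uniformly over $l\ne k$, and that the system $(\mathbf{I}_{K-1}-\mathbf{J})\mathbf{e}'=\mathbf{u}$ for $e'_{k,l}[n]$ (and its counterpart for $\dot{e}'_{k,l}[n]$) has $\mathbf{I}_{K-1}-\mathbf{J}$ invertible for all $n$, which follows from the strict sub-unit spectral radius of $\mathbf{J}$ under the load and power constraints of the system. Once all five contributions are established as almost-sure limits, summing them gives the denominator in the statement, and the continuous mapping theorem yields the claimed $\gamma_k[n]-\eta_{s,k}[n]/(\eta_{1,k}[n]+\eta_{2,k}[n]+\eta_{3,k}[n]+\eta_{w,k})\xrightarrow{a.s.}0$, conditioned on the AP/UE geometry.
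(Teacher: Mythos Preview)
The paper contains no proof of this theorem: the proof environment reads only ``We will soon post an updated version of this document with a detailed proof.'' There is therefore nothing to compare your proposal against on the paper's side.

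On its own merits, your sketch follows exactly the standard deterministic-equivalent program for MMSE receivers (matrix inversion lemma to peel columns from the resolvent, trace lemma to replace quadratic forms, fixed-point equations for the Stieltjes-type scalars, and the derivative system $(\mathbf{I}-\mathbf{J})\mathbf{e}'=\mathbf{u}$ for the second-order terms), specialized to the block-diagonal covariance $\boldsymbol{\Psi}\otimes\mathbf{I}_N$ and per-AP variance profiles $\zeta_{mk}[n-P]$. This is the expected route and the one implicit in the theorem's statement, since the quantities $e_{k,l}[n]$, $\varphi_{mk}[n]$, $\varphi'_{mk}[n]$, $\mathbf{T}(\boldsymbol{\Psi})$ are precisely the objects produced by that machinery.

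One point to tighten: you write the instantaneous SINR as the single quadratic form $\hat{\mathbf{g}}_k^H\mathbf{R}_{yy\mid\hat{\mathbf{H}},k}^{-1}\hat{\mathbf{g}}_k$ and then say ``applied to the signal term \ldots giving \eqref{eq:etaskn} after squaring.'' These are two different routes. The theorem is stated as a ratio of separately computed powers from \eqref{eq:r_signal}, with the combiner $\mathbf{R}_{yy\mid\hat{\mathbf{H}}}^{-1}\hat{\mathbf{g}}_k$ using the \emph{full} covariance (not the rank-one-reduced one). The signal power is $\lvert\hat{\mathbf{g}}_k^H\mathbf{R}_{yy\mid\hat{\mathbf{H}}}^{-1}\hat{\mathbf{g}}_k\rvert^2$ and each interference power is $\lvert\hat{\mathbf{g}}_k^H\mathbf{R}_{yy\mid\hat{\mathbf{H}}}^{-1}(\cdot)\rvert^2$; the common Sherman--Morrison factor $(1+\hat{\mathbf{g}}_k^H\mathbf{R}_{yy\mid\hat{\mathbf{H}},k}^{-1}\hat{\mathbf{g}}_k)^{-2}$ cancels in the ratio, which is why the individual $\eta$'s can be written without it. Your identification of the residual multi-user term via a double peel-off producing $\dot{\varphi}_{mkl}$, $\dot{\varphi}'_{mkl}$ and the three-term structure of $\dot{\epsilon}_{kl}[n]$ is correct and is indeed the most delicate step; just make sure the normalization in front of \eqref{eq:etaskn} matches the one in the denominator terms so that the common factors actually cancel.
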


\begin{proof}

We will soon post an updated version of this document with a detailed proof.
\end{proof}
\color{black}

{In the special case where the channel correlation coefficient $\rho[n]$ is independent of the UE and AP {indices}, it can be shown that $\eta_{s,k}[n],\eta_{1,k},$  and $\eta_{2,k}[n]$ are proportional to $\rho^4[n-P]$, while  $\eta_{3,k}[n]$ is proportional to ($\rho^2[n-P]-\rho^4[n-P]$). Since $\rho[n]$ is a decreasing function of $n$, their overall effect is an increase in interference and a decrease in the  SINR with time.} 
\section{Numerical Results}
In this section, we perform from Monte-Carlo simulations to corroborate the analytical results on the SINR of CF-mMIMO systems under channel aging. We consider a unit area including K=16 UEs served by N=256 single antenna APs. The signaling bandwidth is 5 MHz and the carrier frequency is 5 GHz. The frame duration is T=1024 channel uses. For all the experiments, we assume that the transmit SNR for both data and pilots is $20$~dB. For the multi-slope path loss model in \eqref{eq:multi_slope}, we assume $L=2$, $d_0=0.1$ units and $d_1=\sqrt{2}$ units while $\eta_0=0$ and $\eta_1=3$. To segregate the effects of pilot contamination and channel aging, we consider orthogonal pilots, transmitted over a duration of $16$ channel uses. 

For the computation of the average SINR at a given UE, the UEs and APs are dropped uniformly at random locations over a square with unit area. For each spatial realization of UEs and APs, $100$ independent channel realizations are used to compute the average uplink SINR achieved by a UE. This SINR is averaged over $100$ independent spatial realizations of the UE and AP locations. Thus, the average SINR is computed by averaging over 10,000 independent channel instantiations.

In Fig.~\ref{fig:CF_vs}, we plot the average per user SINR at the CPU as a function of the time index for different UE velocities with $\Delta=0$, i.e., the channel ages at the same rate at all APs. We see that the theoretical and simulated curves match perfectly for CF-mMIMO. {In the figure, theoretical and simulated curves are represented by the lines (no markers) and the  markers (no lines), respectively}. Also, we also compare the relative effects of channel aging on CF-mMIMO against those on cellular mMIMO and small cells. In the case of cellular mMIMO, we consider a single BS at the cell center equipped with $N=256$ antennas. For the case of small cells, we consider {$M=256$} single-antenna APs deployed over the area of interest (the same as in the CF case), with each UE associated with its nearest AP, under the transmit power assumptions considered in~\cite{Ngo_TWC_2019}. The theoretical expressions for the SINR achieved in these two systems can be derived in a similar manner as the expressions presented in this paper. We omit the details due to lack of space. CF-mMIMO achieves much higher SINR than both cellular mMIMO and small cells. Moreover, we observe that the impact of higher mobility on CF is more severe than that on cellular mMIMO or small cells. 
\begin{figure}[t!]
	\centering
	\includegraphics[width=0.45\textwidth]{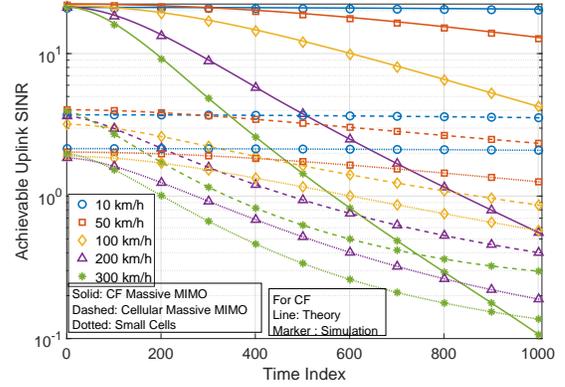}
	\caption{Uplink achievable SINR as a function of time for different UE velocities and system configurations.}\label{fig:CF_vs}
\end{figure}

In Fig.~\ref{fig:vsDelta}, we study effect of variation in the relative speed of the UE with respect to different APs for different numbers of APs. We do this by varying the parameter $\Delta$ described earlier under the assumption that the UEs are moving at an average speed of $v_{\max}=300$~km/h, and for different numbers of APs. We depict the average SINR of the CF-mMIMO system against the time index. Interestingly, a larger value of $\Delta$ mitigates the effects of channel aging. In addition, {we see that the relative loss in the SINR due to channel aging remains approximately unchanged as the number of APs is increased. In other words, contrary to intuition, an increase in the system dimension does not worsen the effects of aging. In other words, the benefits of higher dimensions offsets the degradation due to aging.} 
\begin{figure}[t!]
	\centering
	\includegraphics[width=0.4\textwidth]{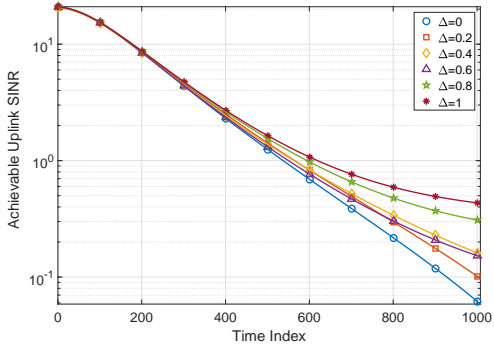}
	\caption{Uplink achievable SINR as a function of sample index at a UE velocity of $300$~km/h for different values of $\Delta$. }\label{fig:vsDelta}
\end{figure}

In Fig.~\ref{fig:Delta_vs_v}, we illustrate the average  SINR per user at the $1024$th sample as a function of the UE velocity for different values of $\Delta$. Also, we compare the achievable SINR against cellular mMIMO and small cells. We observe that at low to medium user mobility, CF-mMIMO significantly outperforms both cellular mMIMO and small cells. However, at high mobilities, the performance of CF-mMIMO becomes comparable to, if not worse than both cellular mMIMO and small cells. This strengthens the observation made in~\cite{Lett_aging_NOMA}, that the effect of channel aging on wireless communication systems is highly dependent on the system model, and warrants careful analysis. 
\begin{figure}[t!]
	\centering
	\includegraphics[width=0.4\textwidth]{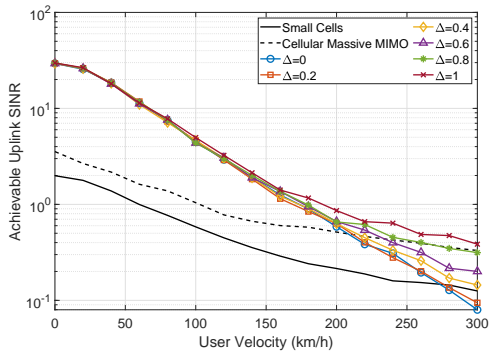}
	\caption{Uplink achievable SINR at the $1024$th sample, as a function of the UE velocity, for different values of $\Delta$. }\label{fig:Delta_vs_v}
\end{figure}
\vspace{-10pt}
\section{Conclusions and Future Work}
In this paper, we studied the effect of channel aging {on the  uplink of a general} CF-mMIMO system. We {elaborated on} the effects of different relative speeds of the UEs with respect to the APs, and derived the {deterministic equivalent SINR} under centralized MMSE combining at the CPU. We {observed} that CF-mMIMO {systems offer much better uplink SINR compared to cellular mMIMO systems and small cells at low user mobilities. However, at high velocities and long frame durations, the relative impact of aging is higher on CF-mMIMO systems.} We also note that unlike the Jakes’ model, there is no well-accepted model for characterizing the effects of mobility on mmWave channels. Therefore, the study of the effect of user mobility in mmWave massive MIMO requires well designed measurement campaigns to quantify the temporal correlation, variation in the angle of arrival/departures, and other channel characteristics. Hence, the study of channel aging in CF systems operating at the mmWave bands could be the topic for
future work.

 \vspace{-10pt}
\bibliographystyle{IEEEtran}
\bibliography{bibJournalList,fading,bib_DCP}
\end{document}